\newcommand{\etal}{{\it et al.\xspace}}
\def\C{{\cal C}}
\def\N{{\cal N}}
\def\V{{\cal V}}
\newcommand{\Hausdorff}{\mathsf{H}}
\newcommand{\diHausdorff}{\overrightarrow{\mathsf{H}}}
\newcommand{\hamcycle}{\textsc{Hamiltonian Cycle}\xspace}
\newcommand{\hampath}{\textsc{Hamiltonian Path}\xspace}
\newcommand{\curvesimp}{\textsc{Directed Curve Simplification}\xspace}
\newcommand{\segcover}{\textsc{Segment Polyline Cover}\xspace}
\newcommand{\consegcover}{\textsc{Connected Segment Polyline Cover}\xspace}
\newcommand{\Reals}{{\mathbb{R}}}            
\newcommand {\tablecell}[1]
{
  \begin {minipage} {3.5cm}
  \centering
  #1
  \end {minipage}
}
\newcommand {\tablerow}[4]
{
  #1
  & \tablecell{#2}
  & \tablecell{#3}
  & \tablecell{#4}
  \\
}
\newcommand{\polylog}{\mathrm{polylog}}
\newcommand{\poly}{\mathrm{poly}}
\newtheorem{observation}{Observation}
\title {Embedding Ray Intersection Graphs \\ and Global Curve Simplification}
\author
{
  Mees van de Kerkhof$\dagger$\inst{1}
  \and
  Irina Kostitsyna\inst{2}\orcidID{0000-0003-0544-2257}
  \and
  Maarten L\"offler\inst{1}
}
\authorrunning{M. van de Kerkhof, I. Kostitsyna and M. L\"offler}
\institute
{
  Department of Computing and Information Sciences, Utrecht University\\
  \email{m.vandekerkhof@uu.nl \quad m.loffler@uu.nl}
  \and
  Department of Mathematics and Computer Science, TU Eindhoven\\
  \email{i.kostitsyna@tue.nl} \\
  $\dagger$: Corresponding author
}
\begin{document}
\maketitle
\begin{abstract}
  We prove that circle graphs (intersection graphs of circle chords) can be embedded as intersection graphs of rays in the plane with polynomial-size bit complexity.
  
  We use this embedding to show that the global curve simplification problem for the directed Hausdorff distance is NP-hard.
  In this problem, we are given a polygonal curve $P$ and the goal is to find a second polygonal curve $P'$ such that the directed Hausdorff distance from $P'$ to $P$ is at most a given constant, and the complexity of $P'$ is as small as possible.
  

\end{abstract}




\section {Introduction}

Problems in the area of graph drawing often find application in complexity theory by providing a basis for NP-hardness proofs for geometric problems. 
%
In this paper, we study an application of embedding {\em circle graphs} (intersection graphs of chords of a circle) as {\em ray graphs} (intersection graphs of half-lines) to the analysis of the complexity of {\em global curve simplification}.
In particular, we prove (refer to Section~\ref {sec:pre} for precise problem definitions):

\begin {itemize}
  \item All circle graphs are ray graphs that have a representation as a set of intersecting rays described by coordinates that have a polynomial number of bits   (Theorem~\ref {thm:circleray});
  \item \hampath is NP-hard on such ray graphs (Corollary~\ref {cor:hamraypolyhard});
  \item \curvesimp is NP-hard (Theorem~\ref {thm:finalcellhard}).
\end {itemize}


\subsection {Global curve simplification}
Curve simplification is a long-studied problem in computational geometry and has applications in many related disciplines, such as graphics, and geographical information science (GIS).
Given a polygonal curve $P$ with $n$ vertices, the goal is to find another polygonal curve $P'$ with a smaller number of vertices such that $P'$ is sufficiently similar to $P$.
Methods proposed for this problem famously include a simple heuristic scheme by Douglas and Peucker~\cite{douglas73algorithms}, and a more involved classical algorithm by Imai and Iri~\cite{imai88algorithms}; both are frequently implemented and cited.
Since then, numerous further results on curve simplification, often in specific settings or under additional constraints, have been obtained~\cite{abam10streaming,ahmw-nltaa-05,barequet2002approx3,berg98correct,buzer07optimal,cc-apcwmnls-96,chen05angle,g-nmfc-96,ii-aoaaplf-86}.

Recently, the distinction was made between {\em global} simplification, when a bound on a distance measure must be satisfied between $P$ and $P'$, and {\em local} simplification when a bound on a distance measure must be satisfied between each edge of $P'$ and its corresponding section of $P$~\cite{kklmw-gcs-19}.
A local simplification is also a global simplification, but the reverse is not necessarily true, see Figure~\ref {fig:globallocal}.
\begin{figure} [t]
	\centering \includegraphics[width=\textwidth] {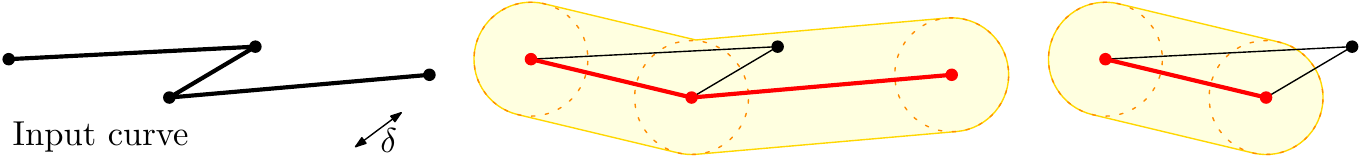} 
\caption{For a target Hausdorff distance $\delta$, the red curve (middle) is a global simplification of the input curve (left), but it is not a local simplification, since the first shortcut does not closely represent its corresponding curve section (right).}
	\label{fig:globallocal}
\end{figure}

Agarwal~\etal~\cite {ahmw-nltaa-05} were first to consider the idea of global simplification under the {\em Fr\'echet distance}. They introduce what they call a {\em weak simplification}: a model in which the vertices of the simplification are not restricted to be a subset of the input vertices, but can lie anywhere in the ambient space. 
Kostitsyna~\etal~\cite{klps-ocmlpp-17} present a polynomial-time algorithm for this model but for the {\em Hausdorff distance}; in particular, the directed  Hausdorff distance from the simplification curve to input curve.
Van Kreveld~\etal~\cite{klw-oopsihfd-18} consider a different setting in which the output vertices should be a subsequence of the input, and they also consider the Hausdorff distance. They give a polynomial-time algorithm for the directed  Hausdorff distance from the simplification curve to input curve, but they show the problem is NP-hard for the directed Hausdorff distance in the opposite direction, and also for the symmetric (undirected) Hausdorff distance.
Van de Kerkhof \etal~\cite {kklmw-gcs-19} prove that the hardness result for the unrestricted Hausdorff distance can be extended to the non-restricted case as well; in addition, they introduce an intermediate {\em curve-restricted} model where the vertices of the simplified curve should lie on the input curve. Surprisingly, the problem is hard under this model for all three variants of the Hausdorff distance.
Table~\ref {tab:results} summarizes the state of the art for global curve simplification under the Hausdorff distance.

\begin{table}[!ht]
\centering 
\caption{Results for global curve simplification under the Hausdorff distance between the curve $P$ and its simplification $P'$. The result in {\bf bold} is from this work.}
\begin{tabular}[c]{|c|c|c|c|}
\hline 
Distance & Vertex-restricted $(\V)$ & Curve-restricted $(\C)$ & Non-restricted ($\N$)\\ 
\hline \hline
\tablerow{$\diHausdorff(P,P')$}
{ 
  NP-hard~\cite{klw-oopsihfd-18}
}
{ 
  NP-hard 
  \cite{kklmw-gcs-19}
}
{ 
  \bf NP-hard 
}
\hline
\tablerow{$\overrightarrow{H}(P',P)$}
{ 
  $O(n^4)$~\cite{klw-oopsihfd-18} \\
  O($n^2\polylog\ n$)
  \cite{kklmw-gcs-19}
}
{ 
  NP-hard 
  \cite{kklmw-gcs-19}
}
{ 
  $\poly(n)$ \cite{klps-ocmlpp-17}
}
\hline
\tablerow{$\Hausdorff(P,P')$}
{ 
  NP-hard~\cite{klw-oopsihfd-18}
}
{ 
  NP-hard
  \cite{kklmw-gcs-19}
}
{ 
  NP-hard 
  \cite{kklmw-gcs-19}
}
\hline
\end{tabular}

\label{tab:results}
\end{table}

\subsection {Embeddings of geometric intersection graphs}

Geometric intersection graphs have long been studied due to their wide range of applications, and lie on the interface between computational geometry, graph theory, and graph drawing
\cite {mc2}.
The graph classes corresponding to intersections of geometric shapes form a natural hierarchy that links to the complexity of those shapes: more complex shapes allow to represent more graphs.
Arguably the most restricted class in this family are the {\em unit interval graphs}~\cite {yuanzhou}, 
and the most general class of intersection graphs of connected shapes in $\Reals^2$ are the {\em string graphs}~\cite {zbMATH03542461}.
Between these two, a hierarchy of classes exist; part of it is illustrated in Figure~\ref {fig:family}.

\begin{figure} [t]
	\centering \includegraphics[] {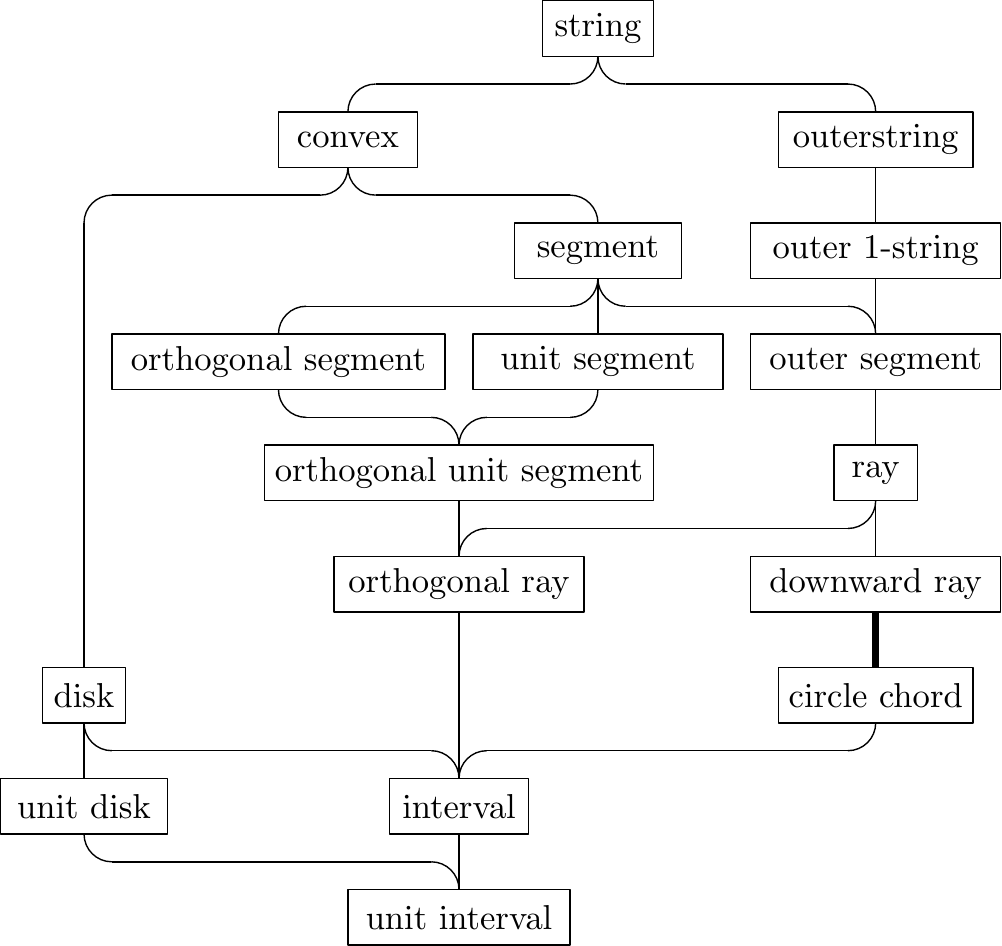} 
\caption{Intersection graph classes and their inclusion relations. The thickened edge indicates our contribution. In order to keep the figure readable many classes and refinements have been omitted; for an extensive overview we refer the reader to e.g.~\cite {11303_4499,cj-rhcgig-17,JGAA-470} or to 
\url{graphclasses.org}.}
	\label{fig:family}
\end{figure}

Hartmann~\etal~\cite {hart} introduce {\em grid intersection graphs} where the shapes are aligned to an orthogonal grid; Mustata~\cite {11303_4499} gives an overview of the state of the art and also discusses the complexity of computational problems on such classes.
Cardinal~\etal~\cite {JGAA-470} prove several relations between segment intersection graphs and ray intersection graphs; in particular they introduce {\em downward ray graphs}: intersection graphs of rays that all point into a common half-plane. Their main result is that {\em recognition} of several classes is complete for the existential theory of the reals.
{\em Circle graphs} are intersection graphs of chords of a circle; equivalently, they may be defined as interval graphs where there is an edge between two intervals on a line when they intersect but are not nested~\cite {zbMATH03859178}.
Circle graphs are known to be contained in 1-string graphs~\cite {10.5555/1283383.1283449}. We are not aware of any published statements of stricter containment; we show in this paper that they are in fact contained in the downward ray graphs.


When utilizing graph embedding algorithms in hardness proofs, one important issue is the representation of the embedding. 
The class of graphs which can be represented as intersection graphs of a given set of shapes is not necessarily the same as the class of graphs which can be represented as intersection graphs of shapes which each can be represented with coordinates of bounded complexity.
      For instance, McDiarmid and M{\"uller}~\cite {mdm-irdsg-13} show that not all realizable unit disk graphs can be realized with coordinates of logarithmic complexity, and the same is true for {\em segment} graphs \cite{km-igs-94}.

\section {Preliminaries, Overview \& Challenges} \label {sec:pre}

\subsection{Polygonal curves and the Hausdorff distance}
A \emph{polygonal curve} (also called a \emph{polyline})  $P = \{p_1,p_2,\ldots,p_n\}$ is defined by an ordered sequence of $n$ vertices. We can treat $P$ as a continuous map $P : [1,n] \rightarrow \mathbb{R}^d$ that maps real values in the interval \([1 \ldots n]\) to points
on the polyline by linearly interpolating between the vertices, which allows us to visualize a polyline as $n-1$ line segments linked one after the other. We will refer to these segments as the polyline's \emph{links}. For integer $i$, $P(i)$ will return the vertex $p_i$. Points on $P$'s $i$'th link are parametrized as
\( P(i + \lambda) = (1-\lambda)p_i + \lambda p_{i+1} \).
The \emph{directed Hausdorff distance} from curve $P$ to curve $Q$, which have $n$ and $m$ vertices respectively, is given by $\overrightarrow{H}(P,Q) = \displaystyle\max_{i \in [1 \ldots n]}\min_{j \in [1 \ldots m]} ||P(i) - Q(j)||$. I.e. it is equal to the Euclidean distance from the point on $P$ furthest from $Q$ to the point on $Q$ closest to that point. The \emph{(undirected) Hausdorff distance} is the maximum over both directions, i.e. $H(P,Q) = \max \{\overrightarrow{H}(P,Q),\overrightarrow{H}(Q,P)\}$.

\subsection{Problem}
The problem we wish to tackle (and which we will prove NP-hard) is:
\begin {problem} \curvesimp.
  Given a polyline $P$, integer $k$ and a value $\delta$,
  find another polyline $P'$
  such that the directed Hausdorff distance from $P$ to $P'$ is at most $\delta$
  and the number of links in $P'$ is at most $k$.
\end {problem}
Note that van de Kerkhof~\etal~\cite{kklmw-gcs-19} call this problem the {\em 
non-restricted global curve simplification problem} to distinguish it from other variants; in the remainder of the present paper we use the shorter name for convenience.
We will find that the key difficulty in solving \curvesimp lies in the following similar problem:
\begin {problem} \segcover.
  Given a set $L$ of line segments in the plane and integer $k$,
  find a polyline $P$
  such that every segment in $L$ is {\em covered} by $P$ (contained in at least one segment of $P$),
  and $P$ has at most $k$ links.
\end {problem}

\subsection {Proof idea}

Our approach is to show that \segcover is hard by a reduction from \hampath on ray intersection graphs.
Specifically, we use the following idea.

\begin{figure} [t]
	\centering \includegraphics [width=\textwidth] {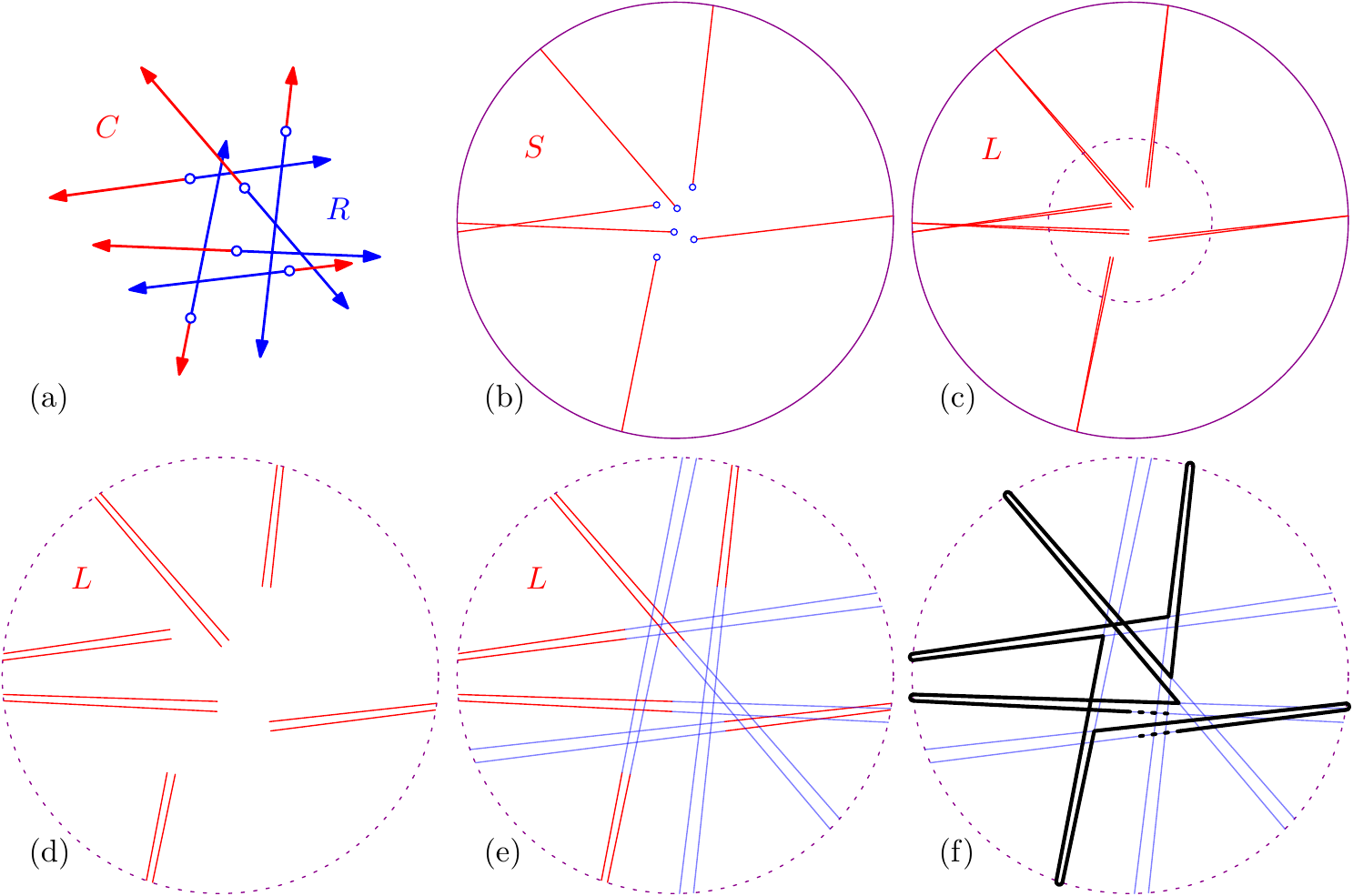} 
    \caption
    { The idea for a small example (which does not admit a Hamiltonian cycle).
    (a) A set of rays $R$ (blue) whose intersection graph is $G$, and the complement $C$ (red).
    (b) Zooming out until we can draw a circle that contains all intersections among rays in $C$.
    (c) Replacing each ray in $C$ by a {\em needle}.
    (d) Zooming back in.
    (e) The extensions of the needles (blue) correspond to the original rays.
    (f) A polygon covering all needles must correspond to a Hamiltonian cycle in $G$ (here, there is no solution).
    }
	\label{fig:idea}
\end{figure}

\begin {observation}
\label {obs:key}
  Let $G$ be a ray intersection graph with $n$ vertices.
  There exists a set $L$ of $2n$ segments
  such that $G$ has a Hamiltonian cycle if and only if
  there is a polygon covering $L$ with $2n$ vertices.
\end {observation}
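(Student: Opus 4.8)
The plan is to carry out the reduction sketched in Figure~\ref{fig:idea}. Since $G$ is a ray graph, I fix a representation by rays $r_1,\dots,r_n$ with supporting lines $\ell_1,\dots,\ell_n$, where $r_i$ and $r_j$ cross exactly when $ij\in E(G)$. A preliminary similarity transformation (the ``zoom out'' of Figure~\ref{fig:idea}(b)) lets me assume that every pairwise crossing lies inside a small disk $D$, that all apexes lie far outside $D$, and that the configuration is in general position. I then replace each ray by a \emph{needle}: a short segment $s_i\subset\ell_i$ placed on the apex side, outside $D$, whose forward extension along $\ell_i$ reproduces $r_i$ and hence passes through exactly the crossings that $r_i$ did (Figure~\ref{fig:idea}(e)). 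Taking $L$ to consist of these needles together with one short collinear guard per ray makes $|L|=2n$, where the guard's role is to force any link that covers a needle to lie along $\ell_i$, so that a covering link behaves like the original ray.

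For the forward direction I construct the polygon directly. Given a Hamiltonian cycle $v_1v_2\cdots v_nv_1$, consecutive rays $r_{v_k}$ and $r_{v_{k+1}}$ meet at a crossing $q_k\in D$, and these crossings exist precisely because each $v_kv_{k+1}$ is an edge. I route a closed polygon that follows the ray lines and turns only at the crossings $q_1,\dots,q_n$, so that each $\ell_{v_k}$ is traversed out to its needle and back toward $D$; with the needles forced to be covered by dedicated links, this uses two vertices per ray and covers every needle and guard, giving a covering polygon with exactly $2n$ vertices. Thus a Hamiltonian cycle yields a $2n$-vertex covering polygon.

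The reverse direction is the crux and the step I expect to be the main obstacle, since covering polygons are free to place their vertices anywhere and one must rule out any ``cheating'' tour that respects the budget without tracing a Hamiltonian cycle. The plan is a counting-and-rigidity argument. For the count, each needle lies on its own non-collinear line and sits outside $D$, so one link covers at most one needle and a link with both endpoints inside $D$ cannot reach a needle; hence at least $n$ links must run out along distinct lines $\ell_i$ to their needles, and a closed tour interleaving them needs at least $2n$ vertices, with equality forcing an alternating structure that visits each $\ell_i$ once. The rigidity part is to show that, with the needle lengths, apex distances, and radius of $D$ tuned as constructed, the tour can hand off from one ray's line to the next within the $2n$-vertex budget \emph{only} when the corresponding rays actually cross, so that the cyclic order of the needle-covering links is a Hamiltonian cycle of $G$. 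Establishing this exclusivity --- quantifying the gadget so that every non-adjacent hand-off provably costs an extra vertex --- is the delicate geometric heart of the proof, and it is exactly what the flexibility of the ray representation (and, for the ensuing hardness results, the polynomial-size coordinates of Theorem~\ref{thm:circleray}) is there to supply.
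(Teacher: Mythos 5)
Your construction of $L$ differs from the paper's, and the difference is fatal: the equivalence you claim is false for your gadget. The collinear guard is both vacuous and harmful. Vacuous, because any link containing the needle $s_i$ is automatically collinear with it and hence already lies on $\ell_i$; the guard forces nothing. Harmful, because $s_i$ and $g_i$ lie on a common line, so a \emph{single} link along $\ell_i$ can cover both; a covering polygon therefore needs only $n$ covering links, leaving $n$ links as completely unconstrained connectors. Given \emph{any} cyclic order of the rays, Hamiltonian or not, cover each pair $s_i\cup g_i$ by one link of $\ell_i$ and join consecutive covering links by arbitrary straight connectors: this is a closed polygon with $2n$ vertices covering $L$, so the direction ``$2n$-vertex covering polygon $\Rightarrow$ Hamiltonian cycle'' fails outright. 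Your counting argument presupposes that the needles are ``forced to be covered by dedicated links,'' but nothing in your construction forces that; and the rigidity step you defer as the delicate heart of the proof cannot be completed, because the statement it needs is false for this $L$. Even if you did force dedicated links, a hand-off between covering links on $\ell_i$ and $\ell_j$ only requires the two supporting \emph{lines} to cross; your short needles sitting outside $D$ leave every line crossing (including complement--complement and ray--complement crossings, which exist also for non-adjacent vertices) available as a turning point.

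The paper's gadget is designed precisely to close these holes, via two ideas your proposal is missing. First, a needle is a pair of \emph{non-collinear}, nearly parallel segments sharing one endpoint at the far end (away from the ray's origin). Then no two of the $2n$ segments of $L$ are collinear, so each of the $2n$ polygon edges covers exactly one segment --- there are no free connectors --- and the two sides of a wedge must be covered by \emph{consecutive} edges, because their extensions beyond the shared endpoint meet no other supporting line. Second, the needles are \emph{long}, not short: each runs along (an approximation of) the complement of its ray from near the origin down past \emph{every} intersection point of supporting lines. Hence every spurious crossing lies in the interior of some needle segment, where a polygon vertex cannot be placed (a covering edge must contain the whole segment), so transitions between different needles are possible only at genuine ray--ray crossings, i.e., at edges of $G$. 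A smaller issue: your forward direction routes ``out to the needle and back'' along a single line, producing a degenerate polygon with two overlapping collinear edges; the wedge shape makes exactly this traversal non-degenerate.
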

We can use Observation~\ref{obs:key} to prove \textsc{Segment Polyline Cover} is NP-hard and then reduce \textsc{Segment Polyline Cover} to \textsc{Directed Curve Simplification}, proving it NP-hard as well.

We sketch the proof of Observation~\ref {obs:key} here; the rest of the paper is devoted to making it precise.
The high level proof idea is illustrated in Figure~\ref {fig:idea}.
Let $R$ be a set of rays in $\Reals^2$, and $G$ its intersection graph.
The {\em complement} of a ray $r$ is the ray with the same origin and the same supporting line as $r$ which points in the opposite direction.
Let $C$ be the complement of $R$.
We cut the rays in $C$ to a set of segments $S$ in such a way that $C$ and $S$ have the same intersection graph.
Then we replace each segment $s \in S$ by a {\em needle}: a pair of segments both very close to $s$ that share one endpoint (different from the corresponding ray's origin).
Let $L$ be the resulting set of $2n$ segments.
Now, any polygon with $2n$ segments covering $L$ must use the two edges of one needle consecutively (since, by construction, the extension of these segments does not intersect the supporting line of any other segment), and it can connect an edge from one needle to an edge of another needle exactly when the corresponding original rays in $R$ intersect.

\subsection {Challenges}

Though the idea is conceptually simple,
there are several difficulties in turning Observation~\ref {obs:key} into a proof that \curvesimp is NP-hard.
\begin {itemize}
  \item The simple idea above is phrased in terms of a \hamcycle and covering segments by a polygon; for our proof we need to use a polyline. We need to be careful in how to handle the endpoints.
  \item We need to establish that \hampath is indeed NP-hard on ray intersection graphs.
  \item We need to know how to embed a ray intersection graph as an actual set of rays with limited bit complexity.
  \item We need to model the input to \curvesimp as an instance of \segcover. Specifically, the complement of a set of rays is not necessarily connected; but the input to \curvesimp must be connected.
  \item The \segcover problem closely resembles \curvesimp for $\delta = 0$; to extend it to the case $\delta > 0$ we (again) need to carefully consider the complexity of the embedding.
\end {itemize}
Most of these challenges can be overcome, as we show in the remainder of this paper. 
However, since the problem of recognizing if a graph can be embedded as a set of intersecting rays is complete for the existential theory of the reals~\cite {JGAA-470}, we know that there are ray intersection graphs that cannot be embedded by a set of rays with subexponential bit complexity, unless $\textbf{NP} = \exists\mathbb{R}$.
In this paper, we work around  this problem by considering a smaller class of graphs, and allowing a superpolynomial grid for our embeddings, which we show is sufficient for the proof of Theorem~\ref {thm:finalcellhard}.

\section {Hamiltonian cycles in ray intersection graphs}
\label {sec:hamray}
 
  \subsection {Embedding circle graphs as ray graphs}

We will show that each circle graph can be embedded as a ray intersection graph. To show this, we construct a set of \(n\) points that lie on a convex, increasing curve such that all chords connecting a pair of points can be extended to a ray to the right, and none of these rays will intersect below the curve. This requires the curve to grow very fast. We use the points \((x,x!)\) for \(x \in [1 .. n]\), where $x! = \Pi_{i=1}^x i$ is the factorial function. 
Indeed, these points have the following property.

\begin {lemma}
\label{lem:case1}
  Let $a, b, c, d \in [1 .. n]$ be four numbers such that $a < b$ and $c < d$. 
  Let $A$ be the ray starting at $(a, a!)$ and containing $(b, b!)$,
  and let $B$ be the ray starting at $(b, b!)$ such that $B \subset A$.
  Similarly, let $C$ be the ray starting at $(c, c!)$ and containing $(d, d!)$,
  and let $D$ be the ray starting at $(d, d!)$ such that $D \subset C$.
  Then $B$ and $D$ do not intersect; hence, $A$ and $C$ intersect if and only if $A \setminus B$ and $C \setminus D$ intersect.
\end {lemma}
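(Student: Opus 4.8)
The plan is to exploit the \emph{convexity} of the point set $\{(x,x!)\}$ together with the super-exponential growth of the factorial. First I would record two elementary consequences of convexity (the sequence $x!$ has strictly positive second differences): for a chord between two sample points $(a,a!)$ and $(b,b!)$ with $a<b$, its supporting line lies strictly \emph{above} the curve for $x\in(a,b)$ and strictly \emph{below} the curve for $x>b$. In particular the segment $A\setminus B$ lies on or above the curve and touches it only at $(a,a!)$ and $(b,b!)$, while the ray $B$ lies on or below the curve and touches it only at $(b,b!)$; the same holds for $C\setminus D$ and $D$.

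By symmetry I may assume $b<d$ (the two chords play symmetric roles, and I assume the four values are distinct, so in particular $b\neq d$, which is the intended general-position setting). The heart of the argument is then to show that on the common horizontal range $x\ge d$ the ray $D$ lies strictly above the ray $B$. For this I compare the two supporting lines $\ell_1\supseteq B$ and $\ell_2\supseteq D$. At $x=d$ we have $\ell_2(d)=d!$ (the line passes through $(d,d!)$) whereas $\ell_1(d)<d!$ because $d>b$ places us on the ``extension below the curve'' side of chord $ab$; hence $\ell_1(d)<\ell_2(d)$. For the slopes I would prove the chain $\mathrm{slope}(\ell_1)<b!\le(d-1)!\le\mathrm{slope}(\ell_2)$: the left inequality is $(b!-a!)/(b-a)<b!$, immediate since $a!>0$ and $b-a\ge1$; the right inequality $(d!-c!)/(d-c)\ge(d-1)!$ reduces to $(c-1)!\le(d-1)!$, which holds because $c\le d-1$; and the middle step uses $b\le d-1$. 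Since $\ell_2$ is the steeper line and already lies above $\ell_1$ at $x=d$, the linear function $\ell_2-\ell_1$ has positive value and positive slope at $x=d$, hence stays positive for all $x\ge d$. As $D$ exists only for $x\ge d$, this gives $B\cap D=\emptyset$.

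I expect this slope comparison to be the crux: it is exactly the place where the rapid growth of the factorial is needed, since it is what forces the ``right'' chord $cd$ to be steeper than the ``left'' chord $ab$ irrespective of where $a$ and $c$ sit, and establishing the bounds $\mathrm{slope}(\ell_1)<b!\le(d-1)!\le\mathrm{slope}(\ell_2)$ is the only non-routine computation.

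Finally, to upgrade $B\cap D=\emptyset$ to the stated equivalence, I would note that $A\cap C$ decomposes into the four intersections of $\{A\setminus B,\,B\}$ with $\{C\setminus D,\,D\}$, and argue that the three ``mixed'' terms vanish. The term $(A\setminus B)\cap D$ is empty by disjoint $x$-ranges, since $A\setminus B$ lives in $[a,b]$ and $D$ in $[d,\infty)$ with $b<d$. For $B\cap(C\setminus D)$ I would invoke the curve facts above: any common point lies on or below the curve (being on $B$) and on or above it (being on the chord $C\setminus D$), hence on the curve, so it must be a shared sample vertex; but $B$ meets the curve only at $(b,b!)$ and $C\setminus D$ only at $(c,c!),(d,d!)$, and distinctness of the four values rules this out. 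Together with $B\cap D=\emptyset$ this yields $A\cap C=(A\setminus B)\cap(C\setminus D)$, so the two sides are simultaneously empty or non-empty, which is the claimed ``if and only if.''
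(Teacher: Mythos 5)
Your proof is correct, and it is organized quite differently from the paper's. The paper fixes $a<c$ by symmetry and then distinguishes three cases by the position of $b$ relative to $c$ and $d$; the factorial's growth is invoked only in the case $d<b$, via an extremal-configuration computation (the flattest possible $B$, at $a=1$, versus the steepest possible $D$, at $c=b-2$, $d=b-1$). You instead normalize $b<d$ and dispose of all configurations at once with the uniform chain $\mathrm{slope}(\ell_1)<b!\le(d-1)!\le\mathrm{slope}(\ell_2)$ together with the convexity fact $\ell_1(d)<d!=\ell_2(d)$; this is cleaner and avoids the case analysis entirely. More importantly, you prove something the paper only asserts: the ``hence'' part. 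The paper establishes $B\cap D=\emptyset$ and treats the equivalence as immediate, but formally one also needs $(A\setminus B)\cap D=\emptyset$ and $B\cap(C\setminus D)=\emptyset$ (knowing only $B\cap D=\emptyset$, the unique crossing of the two supporting lines could still lie in a mixed piece); your four-way decomposition and the above/below-the-curve sandwich argument supply exactly this missing step. Two small points to tidy up: first, your claim that $C\setminus D$ meets the interpolating curve only at $(c,c!)$ and $(d,d!)$ fails when $d=c+1$ (the chord then lies along the curve), but your conclusion survives since the only candidate common point is $(b,b!)$, whose abscissa is an integer distinct from $c$ and $d$; second, the distinctness of $a,b,c,d$ is genuinely necessary (the stated lemma is false if $b=d$, since $B$ and $D$ would then share their apex) and is guaranteed in the paper's construction, where each chord endpoint receives its own point of the curve, so making that hypothesis explicit, as you do, is a feature rather than a defect.
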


\begin {proof}
Since every ray is drawn between two points on the curve of the function $x!$, we know that it intersects this curve only at these points.
The distance between $y$-coordinates of successive points keeps rapidly increasing as $x$ increases, but the distance between $x$-coordinates of successive points is constant.
Thus the slope of a ray $r_1$ whose intersection points with the curve lie to the right of those of ray $r_2$ will be greater than the slope of $r_2$.
Without loss of generality we assume $a < c$. There are three possible cases, see Figure~\ref{fig:casedistinction}:
  \begin{figure}
      \centering
      \includegraphics[width=0.85\textwidth]{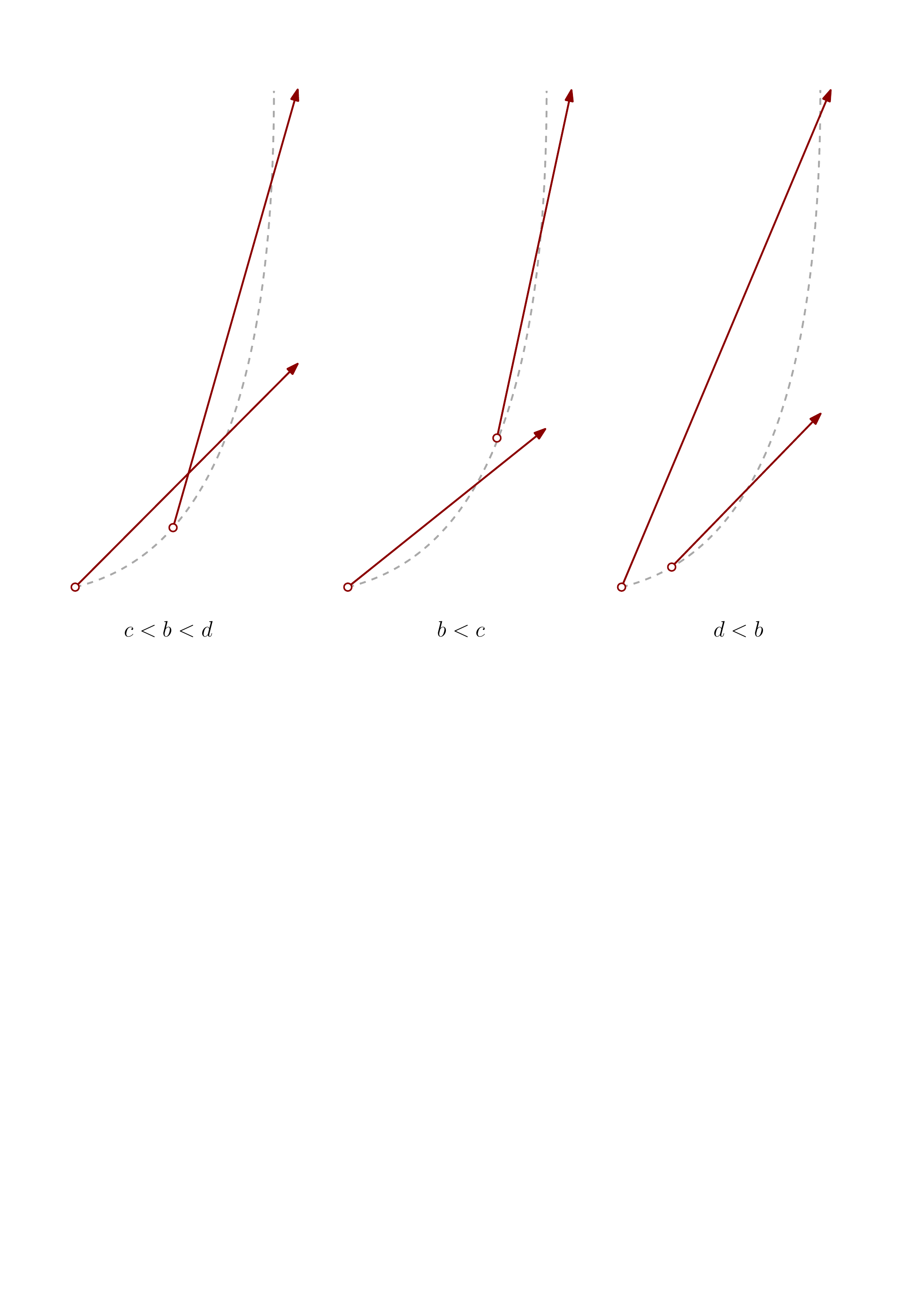}
      \caption{The three cases for two rays. $a$ and $b$ are the $x$-coordinates of the points where the first ray intersects with the curve $y=x!$, and $c$ and $d$ are those values for the second ray. No matter the case, the rays will not intersect below the curve. 
      }
      \label{fig:casedistinction}
  \end{figure}
  \begin{itemize}
   \item $c < b < d$: Here it is clear that $A$ and $C$ will intersect at an $x$-coordinate somewhere between $c$ and $b$, and so $B$ and $D$ will not intersect.
  \item $b < c$: Here we can easily see that $B$ and $D$ do not intersect, as $B$ starts below $D$ and has a lower slope.
  
  \item $d < b$: Whereas the first two cases only require the curve to be convex and increasing, this case also requires the function to grow quick enough: Since $D$ starts to the left of $B$ it could possibly intersect $B$ if its slope was higher. We will now show, however, that the factorial function grows quick enough so that this cannot happen.
  For a fixed $b$, the lowest slope that $B$ can have is when $a=1$. The highest slope that $D$ can have occurs when $c=b-2$ and $d=b-1$. The slope of $B$ is equal to the slope of $A$, which would be $\frac{b! - 1!}{b-1} = b\cdot (b-2)! - \frac{1}{b-1}$. The slope of $D$ (and $C$) in this scenario would be $\frac{(b-1)! - (b-2)!}{1} = (b-2) \cdot (b-2)!$. We can see that the slope of $B$ is higher than $D$ if $2\cdot(b-2)! \geq \frac{1}{b-1}$ which obviously holds for all $b>2$. So since $B$ starts above $D$ and has higher slope, $B$ and $D$ will not intersect.
  \end{itemize}
\end{proof}
Once we have constructed these points we can ``unroll'' any circle graph by picking one chord endpoint on the circle to be the first point and then traversing the circle in clockwise order and assigning each chord endpoint we encounter the next point of our set. See Figure~\ref{fig:chordraysketch} for a sketch. Because the $y$-coordinate for a point will not grow bigger than $O(n^n)$ we can represent the points using polynomial bit complexity.
\begin{figure}[t]
\centering
\includegraphics{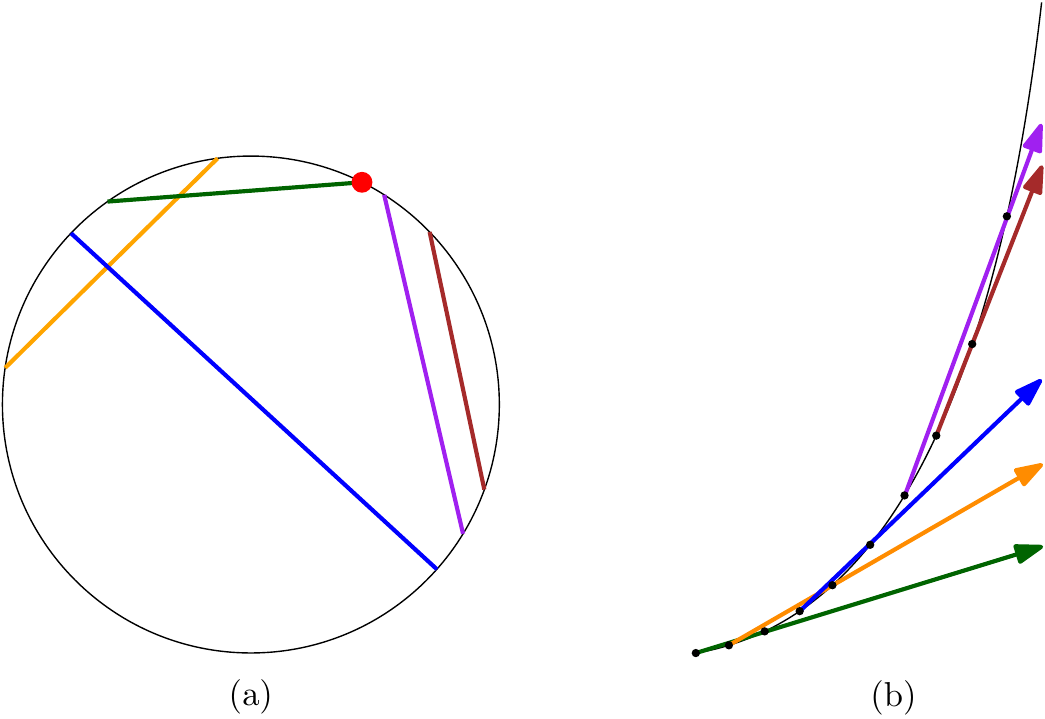}
\caption{(a) A circle graph with colors assigned to the chords. The chosen starting point is marked in red. (b) Unrolled version of (a), by assigning chord endpoints to points on the convex curve they can be extended into rays without intersecting.}
\label{fig:chordraysketch}
\end{figure}
  At this point, we have shown that circle graphs are contained in ray graphs. In fact, our construction gives a bit more:

\begin {theorem} \label {thm:circleray}
  The class of circle graphs is contained in the class of ray intersection graphs.
  Furthermore, every circle graph can be embedded as the intersection graph of a set of rays such that:
  \begin {itemize}
  \item every ray is grounded on a common curve {\em(grounded ray graph~\cite {JGAA-470})};
  \item every ray points towards the upper right quadrant {\em(downward ray graph~\cite {JGAA-470})};
  \item every ray is described by a point and a vector with polynomial bit complexity.
  \end {itemize}
\end {theorem}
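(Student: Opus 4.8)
The plan is to convert a chord diagram for the given circle graph $G$ directly into the ray system described in the theorem, and then invoke Lemma~\ref{lem:case1} to certify that the resulting ray intersection graph is exactly $G$. Concretely, I would fix a representation of $G$ in which every vertex is a chord of a circle and two vertices are adjacent precisely when their chords cross; this places $2n$ distinct endpoints on the circle. I would then ``unroll'' the circle as in Figure~\ref{fig:chordraysketch}: choose a cut point, traverse the circle clockwise, and send the $k$-th endpoint encountered to the curve point $(k,k!)$. For a chord whose endpoints now sit at $(i,i!)$ and $(j,j!)$ with $i<j$, I would take the ray starting at the lower point $(i,i!)$ and passing through $(j,j!)$, continuing into the upper right. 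These are exactly the rays $A,C$ of Lemma~\ref{lem:case1}, and the chord segments they extend play the role of $A\setminus B$ and $C\setminus D$.

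The core of the proof is the chain of equivalences: two rays intersect if and only if the two chord segments intersect, and this holds if and only if the two corresponding vertices are adjacent in $G$. The first equivalence is precisely Lemma~\ref{lem:case1}: the extension parts $B$ and $D$ beyond the upper endpoints never meet, so a pair of rays crosses exactly when the two underlying chord segments cross. For the second equivalence I would use that the points $(x,x!)$ lie in convex position, so two chord segments on this curve cross if and only if their four endpoints form the alternating pairing in sorted order, which is the same interleaving condition that characterizes crossing chords of the circle. The one bookkeeping point needing care is that unrolling replaces the cyclic order of endpoints on the circle by the linear order $1<2<\dots<2n$ along the curve, so I must verify this relabeling neither creates nor destroys crossings. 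This follows once one observes that for any four points in convex position the crossing pairing is the alternating one in sorted order, and that the clockwise traversal makes the sorted cyclic order on the circle coincide with the left-to-right order on the curve; the potentially worrying chord spanning the cut crosses nothing, matching the fact that on the circle it has all remaining endpoints on a single arc. Reconciling this cyclic-to-linear passage is the main obstacle, though it reduces to a short convexity observation rather than a genuine difficulty.

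It remains to check the three listed properties, each immediate from the construction. Every ray originates at a point $(i,i!)$ and is therefore grounded on the common curve $y=x!$. Every ray has direction vector $(j-i,\,j!-i!)$ with $j>i$, whose two components are positive, so every ray points into the upper right quadrant and in particular into the common upper half-plane, exhibiting the family as a downward ray graph. Finally, every coordinate in play is an integer bounded by $(2n)!=O(n^{n})$, so each ray is described by a point and a vector with entries of $O(n\log n)$ bits, which is polynomial; this completes the proof.
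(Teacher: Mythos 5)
Your proposal is correct and takes essentially the same approach as the paper: unrolling the chord diagram onto the points $(k,k!)$, extending each chord to a ray toward the upper right, and invoking Lemma~\ref{lem:case1} to certify that the extensions beyond the chords create no new intersections, with the factorial coordinate bound giving polynomial bit complexity. Your explicit verification that the cyclic-to-linear relabeling of endpoints preserves the interleaving (crossing) relation is a detail the paper leaves implicit, but the construction and key lemma are identical.
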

%

\subsection {Hamiltonian paths and cycles}

Next, we show that Hamiltonian Path problem is NP-hard on ray graphs, and in particular, on ray graphs with polynomial bit complexity.
 
We reduce from the \textsc{Hamiltonian Path} problem on circle graphs. We make use of the proof from Damaschke~\cite{DAMASCHKE19891}. He shows that Hamiltonian cycle is NP-hard on circle graphs, by reducing from Hamiltonian Cycle in cubic bipartite graphs. He also claims that there is an easy adaptation that shows the Hamiltonian path problem is also NP-hard for circle graphs. We will start by making this adaptation explicit: We construct an instance of the circle graph problem as described in~\cite{DAMASCHKE19891}, but then we replace one of the $X$-chords with two parallel chords close to where the $X$-chord was, so that they both intersect the same chords that were intersected by the $X$-chord. For both of the new chords we then add one new chord that only intersects that chord and no others. Now we know that the circle graph will have a Hamiltonian path if and only if the bipartite graph has a Hamiltonian cycle. 
From Theorem~\ref {thm:circleray} we now immediately have:

%

\begin {corollary} \label {cor:hamraypolyhard}
  Hamiltonian Path is NP-hard on intersection graphs of rays that have a polynomial bit complexity.
\end {corollary}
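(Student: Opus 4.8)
The plan is to exhibit a polynomial-time many-one reduction from \hampath on circle graphs, which is NP-hard, to \hampath on ray graphs of polynomial bit complexity, by simply running the embedding guaranteed by Theorem~\ref{thm:circleray}. The corollary then follows because that embedding is both polynomial-time computable and intersection-graph preserving, so the existence of a Hamiltonian path is carried across unchanged.

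First I would pin down that \hampath is NP-hard on circle graphs. This is exactly the adaptation of Damaschke's reduction described above: take his chord diagram whose circle graph encodes a \hamcycle instance on cubic bipartite graphs, replace a chosen $X$-chord by two parallel copies and attach one pendant chord to each, and verify that the modified circle graph admits a Hamiltonian path if and only if the original bipartite graph admits a Hamiltonian cycle. Crucially, this construction outputs a \emph{chord diagram}, i.e. the cyclic order of the $2m$ chord endpoints on the circle, and not merely an abstract graph.

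Second I would observe that the construction in the proof of Theorem~\ref{thm:circleray} is algorithmic and runs in polynomial time on such a chord diagram: read off the clockwise order of the $2m$ endpoints, assign the $i$-th endpoint to the point $(i, i!)$, and form the corresponding rays. The largest coordinate is $(2m)!$, whose binary representation has $O(m \log m)$ bits, so every ray is described by a point and a direction vector of polynomial bit complexity, matching the guarantee of Theorem~\ref{thm:circleray}; all the factorials can be computed within this budget in polynomial time.

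Finally, by Lemma~\ref{lem:case1} and the unrolling argument, the intersection graph of the resulting rays is isomorphic to the input circle graph, so --- since having a Hamiltonian path is an invariant of the abstract graph --- the ray graph has a Hamiltonian path exactly when the circle graph does, completing the reduction. I expect no deep obstacle; the two points that require care are both already supplied. One is the Damaschke adaptation of the first step, where one must check that splitting the $X$-chord and adding pendants genuinely turns the Hamiltonian cycle question into a Hamiltonian path question. The other is reading Theorem~\ref{thm:circleray} not as a mere existence statement but as a polynomial-time algorithm producing polynomially-sized coordinates; once that is observed the reduction is polynomial and the corollary follows immediately.
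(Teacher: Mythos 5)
Your proposal is correct and follows essentially the same route as the paper: establish NP-hardness of \hampath on circle graphs via the explicit adaptation of Damaschke's reduction (splitting an $X$-chord into two parallel chords with pendant chords attached), then transfer hardness to ray graphs through the embedding of Theorem~\ref{thm:circleray}. Your added care in spelling out that the embedding is polynomial-time computable and that the coordinates $(i, i!)$ have $O(m \log m)$-bit representations is exactly what the paper leaves implicit when it writes that the corollary follows ``immediately'' from the theorem.
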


\section {\consegcover}

Next, we introduce the \textsc{Connected Segment Polyline Cover} problem,
and show that it is NP-hard by a reduction from Hamiltonian Path problem on circle graphs through the construction outlined above.

\begin {problem} \consegcover.
  Given a set $L$ of $n$ line segments whose union is connected, and an integer $k$,
  decide if there exists a polyline of $k$ links that fully covers all segments in $L$.
\end {problem}

We start by embedding the circle graph as a ray intersection graph in the manner outlined above. Then, we compute all intersection points between supporting lines of the rays. One of these intersection points will have the lowest \(y\)-coordinate. We will then choose a value that is lower than this lowest $y$-coordinate, which we will denote as $y_\ell$. For each ray $r$, let $p_r$ be its starting point. Let $\bar r$ be $r$'s complement: the part of the supporting line that is not covered by $r$. Let $\tilde r$ be the part of $\bar r$ that has $y \geq y_\ell$. Now we construct a \emph{needle} for each ray's complement: Two line segments that share one endpoint at the point where $\bar r$ has $y$-coordinate $y_\ell$. The other endpoint for both segments lies very close to $p_r$. The endpoints are on opposite sides of the ray starting point so we get a wedge-like shape that runs nearly parallel to $\tilde r$. In addition to these $2n$ segments, which we will refer to as \emph{needle segments}, we create three more segments which we will refer to as the \emph{leading segments}: We create one horizontal segment we call $s_h$ with $y$-coordinate between $y_\ell$ and the lowest intersection point between ray supporting lines, that starts far to the right of the needle segments and ends to the left of them, intersecting all of the needles. Attached to $s_h$ is a large vertical segment we call $s_v$, running up to a point above the highest starting point of a ray. Attached to that is another horizontal segment we call $s_t$, this one being short and ending to the left of any ray starting point. See Figure~\ref{fig:consegpolycover} for a sketch. 

\begin{figure}[t]
\centering
\includegraphics{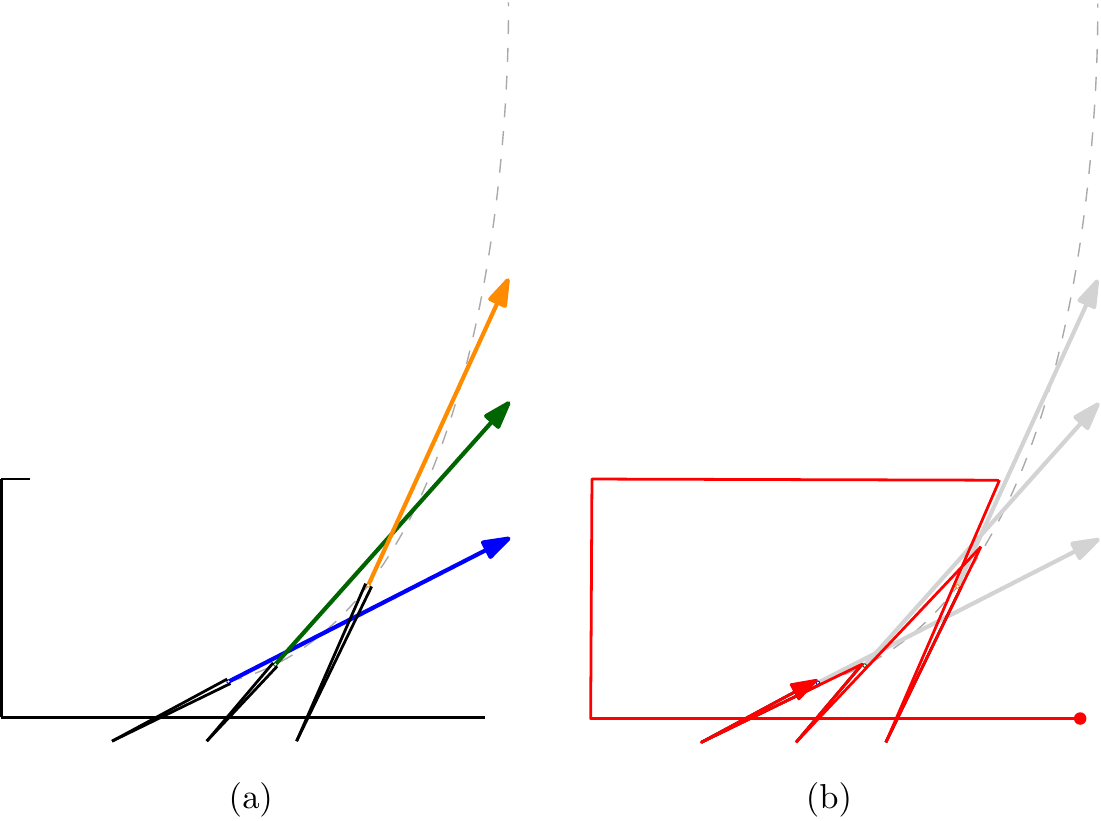}
\caption{(a) Sketch of a reduction of a circle graph with three chords. Segments shown in black. (b) Polyline of $2n+3$ links covering the constructed segments, corresponding to a Hamiltonian Path traversing the yellow, green, and blue ray in that order.}
\label{fig:consegpolycover}
\end{figure}

Now we have $2n+3$ segments in total, where $n$ is the number of chords in the original graph.
\begin {lemma}
We can cover all segments using a polyline of $2n+3$ links if and only if the circle graph has a Hamiltonian Path.
\end {lemma}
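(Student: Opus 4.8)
The plan is to prove both directions of the equivalence by analyzing how an optimal polyline interacts with the needle structure. First I would establish the key structural constraint: because each needle is a narrow wedge whose two segments share an apex at $y$-coordinate $y_\ell$, and because (by the construction inherited from Lemma~\ref{lem:case1} and the choice of $y_\ell$ below all supporting-line intersections) the extension of a needle's two segments does not cross the supporting line of any other needle, a covering polyline of minimum complexity is forced to traverse both segments of each needle consecutively, turning at the apex. This is the geometric heart of the argument and I expect it to be the main obstacle: I would need to argue carefully that no single link can cover parts of two different needles unless they are ``linked'' through a shared turning point, and that covering a needle with fewer than the two dedicated links would require a link lying along the needle's supporting line, which then cannot be reused to reach another needle except where the corresponding original rays intersect. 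The intersection condition is exactly what Observation~\ref{obs:key} encodes, transported through the complement-and-needle transformation.

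Next I would count links. There are $2n$ needle segments, so any covering polyline needs at least $2n$ links to cover them, plus the three leading segments $s_h$, $s_v$, $s_t$, giving the budget of $2n+3$. The leading segments serve to make the union connected and to provide a canonical entry point: the polyline must begin (or end) by covering $s_t$, $s_v$, $s_h$ in sequence, and $s_h$ intersects all needles, so after descending along $s_v$ and traversing $s_h$ the polyline is positioned to begin visiting needles one by one. The role of $s_h$ is to guarantee that the remaining $2n$ links form a path that enters each needle, turns at its apex, and exits toward the next needle; the exit of one needle can reach the entry of another precisely when their original rays intersect, which by Theorem~\ref{thm:circleray} mirrors adjacency in the circle graph.

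For the forward direction, given a Hamiltonian path $v_{i_1}, v_{i_2}, \ldots, v_{i_n}$ in the circle graph, I would construct the covering polyline explicitly: start along $s_t$, go down $s_v$, across $s_h$, then for each $v_{i_j}$ in order traverse the two segments of the corresponding needle (entering along one side, turning at the apex, leaving along the other), routing the transition between consecutive needles through the intersection point guaranteed by the edge $v_{i_j}v_{i_{j+1}}$. This uses exactly $3 + 2n$ links and covers everything. For the reverse direction, I would take a covering polyline of $2n+3$ links, apply the structural constraint to conclude each needle is covered by exactly one consecutive pair of links turning at its apex, observe that the three leading links are forced, and read off the order in which the needles are visited as a permutation of the vertices; the fact that consecutive needles in this order must be joinable forces each consecutive pair to correspond to intersecting rays, hence to an edge of the circle graph, yielding a Hamiltonian path. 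The endpoint handling—ensuring the polyline is a path rather than a cycle and that its two free ends are accounted for by the leading segments and the final needle—is the part where I would be most careful, and it is exactly the ``how to handle the endpoints'' challenge flagged earlier in the paper.
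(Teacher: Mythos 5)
Your overall strategy matches the paper's (count links so that each of the $2n+3$ pairwise non-collinear segments is covered by exactly one link, force the two segments of each needle to be consecutive via the apex, and read off needle-to-needle transitions as ray intersections), but there is a genuine geometric error in how you handle the leading segments, and it breaks your forward construction. The rule governing bends — which your own needle argument relies on — is that two consecutive links covering segments $A$ and $B$ must meet at a point of both supporting lines that is \emph{not interior to either segment} (a link that fully covers $A$ cannot terminate at an interior point of $A$). Now, $s_h$ crosses every needle segment \emph{in its interior}; by the bending rule this means a link covering $s_h$ can never be adjacent to a link covering a needle segment. You invoke exactly this fact (``$s_h$ intersects all needles'') as the reason the polyline can fan out from $s_h$ into the needles, i.e.\ you use it backwards. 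Consequently your claimed traversal order $s_t, s_v, s_h$, then needles, and your explicit construction ``start along $s_t$, go down $s_v$, across $s_h$, then visit the needles'' are geometrically impossible in this instance.

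The correct forcing, which is the content of the paper's proof, is the mirror image: precisely \emph{because} $s_h$ meets all needle segments in their interiors and is parallel to $s_t$, its only possible neighbor is $s_v$ (shared endpoint), so the link covering $s_h$ must be a terminal link of the polyline; then $s_v$ must bend to $s_t$ (all needle supporting lines meet $s_v$'s line below $s_v$), and the entry into the needles happens from $s_t$, whose supporting line crosses each ray \emph{above} the ray's starting point, i.e.\ outside the needle segments, where bending is legal. So the forced order is $s_h, s_v, s_t$, then the $2n$ needle links (or the reverse), with needle-to-needle transitions at ray--ray intersection points, which is what yields the Hamiltonian path equivalence. Your roles for $s_h$ and $s_t$ are swapped; once corrected, the rest of your argument (apex forcing, counting, and both directions of the equivalence) goes through essentially as in the paper.
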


\begin {proof}
To see why this is true, consider that since none of the segments are collinear and no three segments intersect in the same point, a suitable polyline must fully cover one segment with each link. For a polyline to be able to bend from fully covering one segment to fully covering another, either the segments must have a shared endpoint, or the supporting lines of the segments must intersect in a point not contained in either segment.
Segment $s_h$ intersects all needle segments in their interior and is parallel to $s_t$, so we know that a suitable polyline must start\footnote{A suitable polyline could also end with $s_h$, but we will define the polyline to be in this direction for ease of notation} by covering $s_h$, and it must bend at the common endpoint with $s_v$ and then fully cover $s_v$. All of the intersection points between $s_v$ and the supporting lines of needle segments lie below the endpoint it shares with $s_h$, so to be able to cover $s_v$ with the second link the polyline must next connect to $s_t$, meaning it bends at the shared endpoint of $s_v$ and $s_t$. The third link is horizontal, covering $s_t$. Since the supporting line of $s_t$ intersects all of the rays, the polyline can bend to any needle segment for its next link. 

Since we have covered our additional segments $s_h$, $s_v$, and $s_t$, the rest of the $2n$ links must cover one needle segment each. Observe that the needle segments all extend downward to below the lowest intersection point between supporting lines. This means that when a link covers a needle segment when travelling downward, the next link must then travel upward on the other half of the needle, as all intersection points with supporting lines of other segments lie in the segment's interior. When the next link then covers a needle segment when travelling upward, the only places the polyline can viably bend next are near places where the ray associated with the previous needle intersects another ray. So we can cover the $2n$ needle segments using a polyline of $2n$ links if and only if there is a Hamiltonian Path in the ray intersection graph and thus a Hamiltonian Path in the original circle graph. 
\end {proof}

Since the transformation is polynomial, we know the problem is NP-hard. We can also see that the problem is in NP, since for any instance we can expect that if a polyline of $k$ links exists covering a set $L$, one must also exist where each vertex has coordinates of polynomial complexity, since the vertices could all lie on the intersection points of the supporting lines of the segments, or otherwise on points with rational coordinates on those supporting lines. This polyline could serve as a certificate for the verification algorithm. This gives the following theorem:
\begin{theorem}
  \consegcover is NP-complete.
\end{theorem}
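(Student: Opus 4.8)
The plan is to prove the two halves of NP-completeness separately, taking the hardness side from the machinery already assembled above and the membership side from a certificate argument.

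For NP-hardness, I would reduce from \hampath on circle graphs, which is NP-hard by the explicit adaptation of Damaschke's construction described earlier. Given a circle graph, the reduction first invokes Theorem~\ref{thm:circleray} to embed it as a grounded, downward ray intersection graph, and then builds the $2n+3$ input segments of the \consegcover instance: the $2n$ needle segments (two per ray complement, meeting at the point of height $y_\ell$) together with the three leading segments $s_h$, $s_v$, $s_t$. The preceding lemma already certifies correctness, namely that this instance admits a covering polyline of $2n+3$ links exactly when the circle graph has a Hamiltonian path. Two points still need to be nailed down. First, the output really is a legal \consegcover instance: the union of the segments is connected because $s_h$ meets every needle in its interior and $s_h$, $s_v$, $s_t$ are chained at shared endpoints, so all $2n+3$ segments lie in one connected component. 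Second, and more importantly, the reduction must run in polynomial time, and this is exactly where Theorem~\ref{thm:circleray} is essential, since it guarantees the ray embedding uses coordinates of polynomial bit complexity. All derived quantities — the pairwise intersection points of the supporting lines, the threshold $y_\ell$ just below the lowest such intersection, and the needle endpoints placed arbitrarily close to each ray origin — can then be computed with polynomial bit complexity, so the whole instance is produced in polynomial time.

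For membership in NP, the natural certificate is a covering polyline of at most $k$ links, and the only real content is showing that whenever a feasible polyline exists, one exists whose vertices have polynomial bit complexity, so that the certificate is polynomially bounded. The idea, following the reasoning sketched just above the statement, is that the vertices of a feasible polyline can be relocated onto the arrangement of supporting lines of the input segments — each bend placed either at an intersection point of two supporting lines or at a suitable rational point on a supporting line — without increasing the number of links or breaking coverage. Since the input segments have polynomial bit complexity, every such point does too, so the snapped polyline is a polynomially sized certificate. Verification is then a direct polynomial-time geometric check: confirm that there are at most $k$ links and that every input segment is contained in a single link.

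The hard part is the membership argument rather than the hardness argument: one must argue convincingly that feasibility can always be witnessed by a polyline snapped to the arrangement of the supporting lines without increasing the link count, which requires ruling out the possibility that some essential bend can only be realized at an irrational or high-complexity point. The hardness direction is essentially a bookkeeping exercise once Theorem~\ref{thm:circleray} supplies a polynomial-size embedding and the preceding lemma supplies correctness; the remaining care there lies in checking connectivity of the union and verifying that every coordinate computed during the reduction stays polynomially bounded.
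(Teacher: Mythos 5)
Your proposal is correct and follows essentially the same route as the paper: NP-hardness via the needle-and-leading-segment reduction from \hampath on circle graphs (with Theorem~\ref{thm:circleray} supplying the polynomial-bit-complexity embedding and the preceding lemma supplying correctness), and NP membership via a certificate polyline whose vertices are snapped to intersection points or rational points on the supporting lines of the input segments. Your additional explicit checks (connectivity of the union, polynomial boundedness of all derived coordinates) are points the paper leaves implicit, and your caveat about the membership argument matches the informal level of rigor in the paper's own justification.
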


\section {\curvesimp}

Finally, we reduce \consegcover to \curvesimp.
As a problem instance, we are given a set $L$ of $n$ non-collinear line segments in the plane whose union is connected. We construct an input polyline of polynomial size that completely covers the set of segments and no other points. We could do this, for example, by treating the segment endpoints and intersection points as vertices of a graph connected by edges, and have our polyline be the path of a breadth-first search through the graph. We set $\delta$ to $0$. 
Now we know that, since a simplification must cover the union of $L$, any simplification of our input polyline that has $n$ links must cover each segment in $L$ completely with one link. This means such a simplification would be a solution to our instance of \consegcover.
Since the reduction is polynomial in size, we know that this variant of the GCS problem is NP-hard, and using a similar argument to the one for \consegcover it is easy to see that it is in NP as well.
\begin{theorem}
\label{thm:finalcellhardsimplified}
  \curvesimp, restricted to instances where $\delta=0$, is NP-complete.
\end{theorem}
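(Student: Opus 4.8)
The plan is to reduce \consegcover, which we have just shown to be NP-complete, to \curvesimp restricted to $\delta = 0$, and to argue membership in NP separately. Starting from a \consegcover instance consisting of a set $L$ of $n$ pairwise non-collinear segments whose union is connected together with the integer $k$ (recall that our hardness instances satisfy $k = n$), I would first build an input polyline $P$ whose image as a point set is \emph{exactly} $\bigcup L$ and nothing more. To do so I would take the planar arrangement of $L$, which has $O(n^2)$ vertices and edges and is connected, and trace a closed walk visiting every arrangement edge, for instance by doubling each edge to make the multigraph Eulerian and following an Eulerian tour, or by a depth-first traversal. Realising this walk as a polyline yields a $P$ with $O(n^2)$ links whose image equals $\bigcup L$; in particular $P$ has polynomial size and covers no point outside $\bigcup L$. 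The reduction then outputs the \curvesimp instance $(P, k, 0)$.

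Next I would establish correctness. Since $\delta = 0$, a candidate $P'$ satisfies $\diHausdorff(P,P') = 0$ precisely when every point of $P$ lies on $P'$, i.e.\ when $\bigcup L = \mathrm{image}(P) \subseteq \mathrm{image}(P')$. The forward direction is immediate: a polyline of at most $k$ links that fully covers $L$ in the sense of \consegcover certainly has $\bigcup L$ contained in its image, so it is a valid simplification. The substantive direction is to convert a simplification $P'$ with at most $k = n$ links and $\bigcup L \subseteq \mathrm{image}(P')$ back into a \consegcover solution, i.e.\ to show that each segment of $L$ is in fact \emph{fully} covered by a single link of $P'$.

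This gap between ``the union is covered'' (what the Hausdorff condition gives) and ``each segment is covered by one link'' (what \consegcover demands) is exactly the step I expect to be the main obstacle, and I would close it using non-collinearity together with the link budget. Because any two segments of $L$ lie on distinct supporting lines, a single link of $P'$ (a straight segment) can contain a positive-length portion of at most one segment of $L$; every other segment it meets, it meets only in isolated points. Hence for each $\ell \in L$ the links of $P'$ covering positive length of $\ell$ all lie on the supporting line of $\ell$, and together they must cover $\ell$, so at least one link lies on that line. As the $n$ segments have $n$ distinct supporting lines, this already accounts for $n$ distinct links; since $P'$ has at most $n$ links, exactly one link lies on each segment's line and must contain the whole segment. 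This produces a \consegcover solution and completes the equivalence. (For $k > n$ the two notions of covering genuinely diverge, since a single segment could be covered by several collinear links, which is why it is convenient to reduce from the hard instances, where $k$ equals the number of segments.)

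Finally I would argue membership in NP, mirroring the argument used for \consegcover: if a simplification of at most $k$ links exists, then one exists whose vertices are intersection points of the segments' supporting lines, or rational points of polynomial bit complexity on those lines, so such a $P'$ is a polynomial-size certificate verifiable in polynomial time. Combined with the polynomiality of the reduction, this shows that \curvesimp restricted to $\delta = 0$ is NP-complete.
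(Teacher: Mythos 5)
Your proposal is correct and follows essentially the same route as the paper: reduce \consegcover with $\delta = 0$ by constructing an input polyline whose image is exactly $\bigcup L$ (the paper traverses the arrangement graph of endpoints and intersection points), then observe that a simplification with $k=n$ links covering the union must cover each segment with a single link. Your counting argument via pairwise non-collinearity --- that a link can cover positive length of at most one segment, forcing one link per supporting line --- is a careful spelling-out of the step the paper merely asserts, and your NP-membership argument matches the paper's as well.
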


\subsection {Non-zero $\delta$}
\label{sec:nonzerodelta}
We can also extend this reduction to non-zero \(\delta\) by picking \(\delta > 0\) but still small enough such that it would not change the combinatorial structure of the space the polyline can lie in, so each link of the polyline must still correspond to exactly one segment in $L$.
For the segments we have constructed in the earlier reduction, we will show that setting \(\delta < \frac{3}{4n!}\) will guarantee the structure of the space will not change. So a simplified polyline of $2n+3$ links with \(0 < \delta < \frac{3}{4n!}\) will only exist if and only if it also exists for \(\delta = 0\).

For space reasons, we only sketch the ideas of the proof here; details can be found in Appendix~\ref{sec:nonzeroapp} in the full version of this section. 

The main idea is to choose $\delta$ sufficiently small so that there are no additional intersections between extensions of segments that are not supposed to intersect. When \(\delta = 0\), the space the simplified polyline can occupy is exactly the input segments, but for non-zero \(\delta\), the polyline does not have to exactly cover the original segment. If we center two circles with radius \(\delta\) on the endpoints of a segment, the two inner tangents of these circles will form the bounding lines of a cone that covers all possible polyline links that are able to ``cover'' a segment. We will call the part of the cone that is within \(\delta\) of the segment the \emph{tip} of the cone, and the rest of the cone the \emph{tail} of the cone. For \(\delta = 0\), the supporting line for the segment forms a degenerate cone of width 0. To preserve the combinatorial structure, fattening the cones cannot introduce intersections between cone tails, as these correspond to two segments' supporting lines intersecting in the exterior of the segments.
To simplify the algebra, we consider a slightly larger cone, between the lines connecting points created by going \(2\delta\) to the left and right of the original endpoints of the needle segments. It is easy to see that these larger cones contain the true cones.
We reach the bound on $\delta$ given above by case distinction of different configurations of potentially intersecting cones, and taking the minimum.

Since we can have a small enough \(\delta\) of polynomial bit complexity, this means the \textsc{Directed Curve Simplifiction} problem is NP-hard in general, as for larger values of \(\delta\) the construction could be scaled up. 

If the general problem is in NP is hard to say, since our approach for showing this for the previous problems does not extend, and it might be possible that inputs exist where the only possible simplifications of $k$ links have vertex coordinates of exponential bit complexity. This remains an open problem.

\begin{theorem}  
\label{thm:finalcellhard}
  \curvesimp is NP-hard.
\end{theorem}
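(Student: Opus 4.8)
The plan is to prove Theorem~\ref{thm:finalcellhard} by combining the $\delta=0$ hardness result (Theorem~\ref{thm:finalcellhardsimplified}) with a robustness argument that shows the reduction survives for a carefully chosen strictly positive $\delta$. Since Theorem~\ref{thm:finalcellhardsimplified} already establishes NP-completeness of \curvesimp for $\delta=0$ via a polynomial reduction from \consegcover, the only remaining work is to verify that the combinatorial behaviour of the construction does not degrade when $\delta$ is perturbed away from zero by a polynomially-representable amount. The key observation is that the entire argument in the $\delta=0$ case rests on two structural facts about the input segments: no three segments are concurrent and no two are collinear, so that each link of an optimal polyline must cover exactly one segment, and the only admissible ``bends'' between consecutive links occur at shared needle endpoints or at genuine intersection points of supporting lines. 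These are \emph{open} conditions, so they persist under a sufficiently small perturbation.

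First I would make precise the slack that $\delta>0$ introduces. For $\delta=0$ a link covering a segment must lie exactly on its supporting line; for $\delta>0$ the admissible links covering a segment $s$ instead fill a thin double-cone (the tip-and-tail region described in the excerpt), obtained as the region swept by lines tangent to the two radius-$\delta$ disks centred at the endpoints of $s$. I would replace this true cone by the slightly larger and algebraically cleaner cone bounded by the lines through the points obtained by moving $2\delta$ left and right of each endpoint, and note these larger cones contain the true ones, so bounding their behaviour suffices. The combinatorial structure is preserved precisely when fattening to these cones creates no new intersections: concretely, when no two \emph{tails} of distinct cones meet (which would spuriously let a polyline bend between two segments whose supporting lines only cross outside both segments), and when the cones around the leading segments $s_h$, $s_v$, $s_t$ continue to force the same forced prefix of the covering polyline.

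Next I would carry out the quantitative step: bound $\delta$ from above so that all these open conditions hold simultaneously. The dominant constraint comes from the factorial embedding of Lemma~\ref{lem:case1}, where the closest pair of ``features'' that must remain separated is governed by differences of factorials, so the smallest gap scales like $1/n!$. A case analysis over the finitely many types of potentially-interacting cone pairs (tail--tail among needle segments, and interactions involving the three leading segments), taking the minimum of the resulting thresholds, yields the stated bound $\delta < \tfrac{3}{4n!}$. For any such $\delta$ the admissible covering region for each segment is an isolated thin cone whose tail meets no other cone's tail, so a $2n+3$-link polyline covering all segments within directed Hausdorff distance $\delta$ exists if and only if one exists for $\delta=0$; hence the $\delta=0$ reduction lifts verbatim and the problem is NP-hard for this $\delta$. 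Finally, observe $\delta=\tfrac{1}{n!}$, say, has polynomial bit complexity, so the instance remains polynomial-size, and larger target distances are handled by uniformly scaling the whole construction. This establishes NP-hardness of \curvesimp in general.

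The main obstacle I anticipate is the quantitative case analysis bounding $\delta$. Because the needle segments are near-parallel to the complements of rays embedded on the factorial curve, the supporting lines of distinct needles can have extremely small angular separation, so the ``tails'' of their cones can approach one another very closely before the construction breaks; controlling the worst such pair requires estimating how the factorial-gap geometry translates into a minimum clearance, and it is here that the $1/n!$ scale enters and where an over-generous $\delta$ would silently introduce spurious bends and thus false Hamiltonian paths. A secondary subtlety, which I would flag rather than resolve, is that membership in NP is not claimed: the certificate argument used for \consegcover does not obviously extend, since for $\delta>0$ an optimal polyline need not have vertices on intersection points of supporting lines and may in principle require exponential-bit coordinates; accordingly the theorem asserts only NP-hardness, and NP membership is left open.
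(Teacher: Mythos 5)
Your proposal is correct and takes essentially the same route as the paper: lift the $\delta=0$ reduction (Theorem~\ref{thm:finalcellhardsimplified}) via the cone tip/tail analysis with the enlarged $2\delta$ cones, derive the threshold $\delta < \frac{3}{4n!}$ by a case analysis over pairs of needle cones, handle larger $\delta$ by scaling the construction, and leave NP membership open. One minor slip: your example value $\delta = \frac{1}{n!}$ actually exceeds the bound $\frac{3}{4n!}$; choose, e.g., $\delta = \frac{1}{2\,n!}$ instead.
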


\section {Conclusion}
We have shown that \curvesimp is NP-hard, which completes the results in Table~\ref {tab:results} and completely settles the complexity of global curve simplification under the Hausdorff distance.

As the main tool in our reduction, we have shown that every circle graph can be embedded as an intersection graph of rays with coordinates of polynomial complexity.
It is still an open question if it is possible to embed every circle graph as rays with coordinates of logarithmic complexity.
Whether \curvesimp is in NP is another open problem that remains.

\section* {Acknowledgements}
The authors would like to thank Birgit Vogtenhuber, Tillman Miltzow, and anonymous reviewers for valuable discussions and suggestions.
Mees van de Kerkhof and Maarten L\"offler are (partially) supported by the Dutch Research Council under grant  628.011.00. Maarten L\"offler is supported by the Dutch Research Council under grant 614.001.50.

\bibliographystyle{splncs04}
\bibliography{refs}

\section {Section~\ref{sec:nonzerodelta} (full)}
\label{sec:nonzeroapp}
We can also extend this reduction to non-zero \(\delta\) by picking \(\delta > 0\) but still small enough such that it would not change the combinatorial structure of the space the polyline can lie in, so each link of the polyline must still correspond to exactly one segment in $L$.
For the segments we have constructed in the earlier reduction, we will show that setting \(\delta < \frac{3}{4n!}\) will guarantee the structure of the space will not change. So a simplified polyline of $2n+3$ links with \(0 < \delta < \frac{3}{4n!}\) will only exist if and only if it also exists for \(\delta = 0\).

\subsubsection {Overview.}

The main idea is to choose $\delta$ sufficiently small so that there are no additional intersections between extensions of segments that are not supposed to intersect. When \(\delta = 0\), the space the simplified polyline can occupy is exactly the input segments, but for non-zero \(\delta\), the polyline does not have to exactly cover the original segment. If we center two circles with radius \(\delta\) on the endpoints of a segment, the two inner tangents of these circles will form the bounding lines of a cone that covers all possible polyline links that are able to ``cover'' a segment. We will call the part of the cone that is within \(\delta\) of the segment the \emph{tip} of the cone, and the rest of the cone the \emph{tail} of the cone. For \(\delta = 0\), the supporting line for the segment forms a degenerate cone of width 0. To preserve the combinatorial structure, fattening the cones cannot introduce intersections between cone tails, as these correspond to two segments' supporting lines intersecting in the exterior of the segments.
To simplify the algebra, we consider a slightly larger cone, between the lines connecting points created by going \(2\delta\) to the left and right of the original endpoints of the needle segments. It is easy to see that these larger cones contain the true cones.
We reach the bound on $\delta$ given above by case distinction of different configurations of potentially intersecting cones, and taking the minimum.
Since we can have a small enough \(\delta\) of polynomial bit complexity, this implies that the \textsc{Directed Curve Simplifiction} problem is NP-hard in general, as for larger values of \(\delta\) the construction could be scaled up. 

\subsubsection {Details.}

In the remainder of this section, we will treat $\delta$ as an unknown and aim to determine a value where we can guarantee any \(\delta\) smaller than that value will not change the combinatorial structure. 

First, consider that we have constructed $2n+3$ segments. For \(\delta = 0\), the space the polyline can lie in is exactly these segments, but for non-zero \(\delta\), this space is what you get when each segment gets replaced by the Minkowski sum of the original segment and a disk of radius \(\delta\). This means that the polyline does not have to exactly cover the original segment, meaning additional angles of approaching a segment open up. If we center two circles with radius \(\delta\) on the endpoints of a segment, the two inner tangents of these circles will form the bounding lines of a cone that covers all possible polyline links that are able to ``cover'' a segment. We will call the part of the cone that is within \(\delta\) of the segment the \emph{tip} of the cone, and the rest of the cone the \emph{tail} of the cone. For \(\delta = 0\), the supporting line for the segment forms a degenerate cone of width 0. To preserve the combinatorial structure, fattening the cones cannot introduce intersections between cone tails, as these correspond to two segments' supporting lines intersecting in the exterior of the segments.
 We do not have to consider the cones associated with the leading segments, since they already intersect all of the needles' cones in the tip. 
 So we will focus on the cones associated with needle segments.

Instead of taking the actual cones and the half-lines that bound them, we will consider a larger cone that is easier to calculate with. As bounding lines, we will use endpoints created by going \(2\delta\) to the left and right of the original endpoints of the needle segments. For a ray starting at point \((a,a!)\), the original endpoints for one of its needle segments are \((x^a_\ell,y_\ell)\) and a point very close to \(a,a!\), where \(x^a_\ell\) is the x-coordinate of the supporting line of the ray at \(y=y_\ell\). This gives the segment a slope of \(\frac{a! - y_\ell}{a-x^a_\ell}\). The slope of the left bounding line of the cone is then \(\frac{a! - y_\ell}{a-x^a_\ell - 4\delta}\). So, departing from the intersection point of the bounding line and the needle segment, for every \((a!-y_\ell)\) we move upwards, the bounding line lies \(4\delta\) further to the left as compared to the needle segment and/or the ray associated with the needle. 

To see how small \(\delta\) needs to be, we consider any two rays where one ray connects the points \((a,a!),(b,b!)\) and  one ray connects the points \((c,c!),(d,d!)\). We will refer to the cones induced by their needle segments as the \((a,b)\)-cone and \((c,d)\)-cone respectively. We assume \(a<c\). Now there are three cases again analogous to those in the proof for Lemma~\ref{lem:case1}, see Figure~\ref{fig:casedistinction2}:
\begin{figure}
      \centering
      \includegraphics[width=0.85\textwidth]{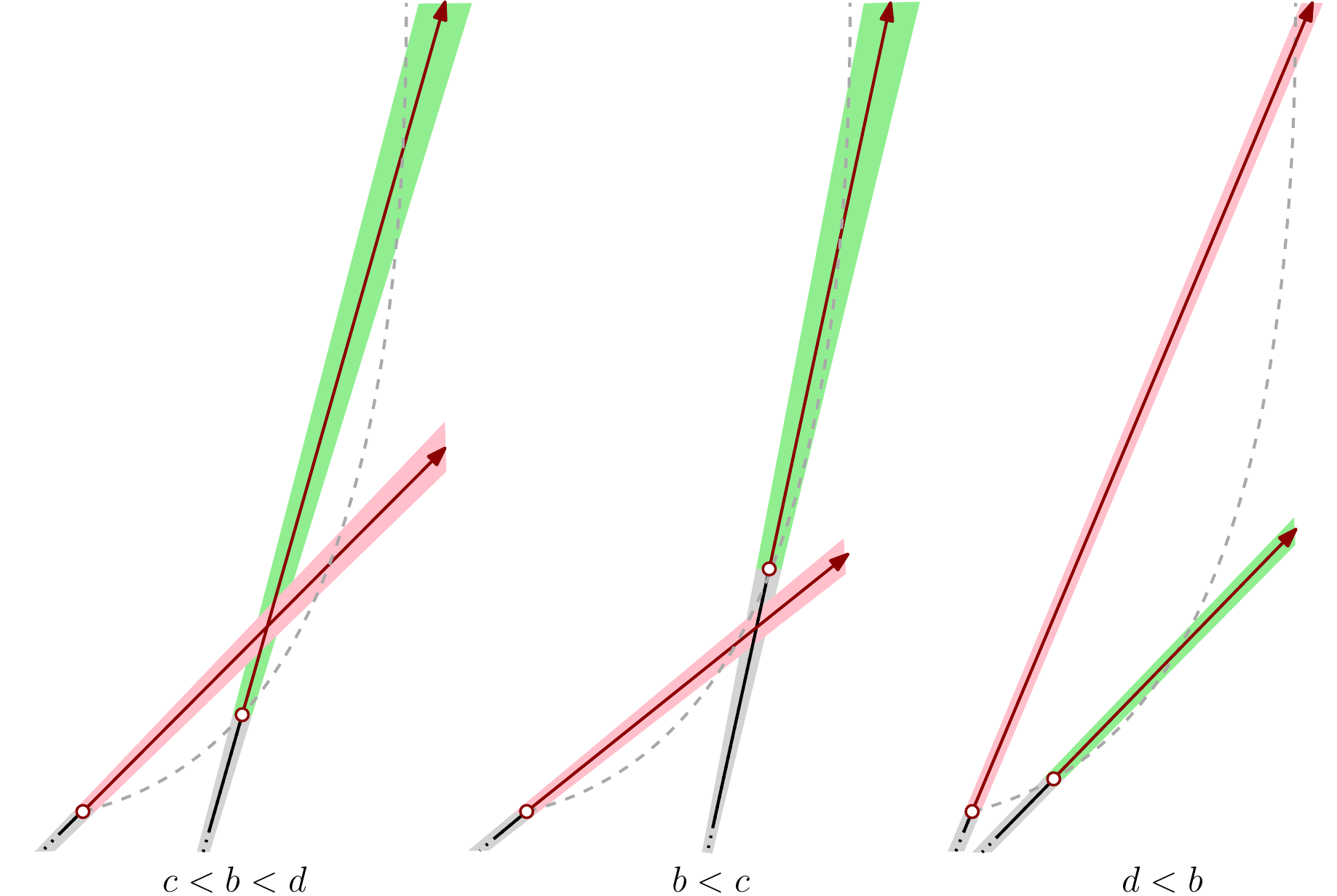}
      \caption{The three cases for two needle segments' cones. The segments are shown in black. The tips of the cones are shown in gray. The tail of the $(a,b)$-cone is shown in pink and the tail of the $(c,d)$-cone in gray.
      For the second and third case, the cone can be fattened (i.e. \(\delta\) increased) as long as this does not cause the tails to intersect.
      }
      \label{fig:casedistinction2}
  \end{figure}
\begin{itemize}
\item The case where \(c < b < d\). Here, the tails of the cones already intersect, so no \(\delta\) is going to introduce a new intersection.

\item The case where \(b < c\). Here, there is already an intersection between the tail of the \((a,b)\)-cone and the tip of the \((c,d)\)-cone. It is important that after fattening the cones there is no intersection that lies in both tails. At $y=c!$ we know that the right bounding line of the \((c,d)\)-cone has shifted \(2\delta\) to the right of where it would be for \(\delta = 0\). Likewise, the left bounding line of the \((a,b)\)-cone has shifted $4\delta \cdot \frac{c!-a!}{a! - y_\ell}$ leftwards. If the total shift is less than the original horizontal distance between the rays at this $y$-coordinate, we know the intersection between the cones still lies in the tip of the $(c,d)$-cone. The horizontal distance between the unshifted lines at this value for $y$, $dist_h$ can be written as \[dist_h = a + \frac{c!-a!}{b!-a!}\cdot(b-a) - c\]
For a fixed $b$, we get a smaller value the higher $a$ is, as this increases the slope of the $(a,b)$-cone. The value is also smaller the smaller $c$ is, as this decreases the vertical distance and therefore the horizontal distance to the point $(c,c!)$ as well. Substituting $a = b-1, c= b+1$ into the equation gives $dist_h = \frac{b^2}{b-1} -1$.
This is smallest when $b$ is minimal, so the minimum distance of $3$ is achieved when $b = 2$. This means that a bound on our \(\delta\) is \[3 > 4\delta \cdot \frac{c!-a!}{a! - y_\ell}\] We can replace the right hand side of this inequality by an upper bound \[3 > 4\delta \cdot n!\] This gives us an upper bound on the value for \(\delta\) of \(\delta < \frac{3}{4n!}\).
\item The case where  \(d < b\). Here there is an intersection of the cones in the tips. We need to make sure not to introduce new intersections of the tails. For this, the left bounding line of the \((c,d)\)-cone needs to intersect the right bounding line of the \((a,b)\)-cone in the tip. We know that at \(y = a!\), which is near where the tip of the \((a,b)\)-cone ends, the right bounding line has shifted exactly \(2\delta\) to the right compared to when \(\delta=0\). At \(y = c!\), the left bounding line of the \((c,d)\)-cone will have shifted \(2\delta\) to the left, so at \(y = a!\) we know that the total shift is less than \(4\delta\). The horizontal distance \(dist_h\) between the supporting lines of the two rays at \(y=a!\) is equal to \[dist_h = c - \frac{c!-a!}{d!-c!}\cdot(d-c) - a\] For a fixed \(c\), the smallest distance is obtained when \(a = c-1\) (as increasing the value of \(a\) decreases the vertical distance between \(c\) and \(a\) meaning they are also closer horizontally at \(y = a!\)), and \(d = c+1\) (as decreasing \(d\) decreases the slope of the \((c,d)\)-cone, and so the supporting line will be further left at \(y=a!\)). Plugging these values into the equation for \(dist_h\) lets us rewrite it to \[dist_h = 1 - \frac{c-1}{c^2}\] Since \(c\ge2\) in this case, we know the smallest possible distance is \(\frac{3}{4}\). Therefore, we know this case will pose no problems if \(4\delta < \frac{3}{4} \rightarrow \delta < \frac{3}{16}\), as the shift is smaller than the original distance so the intersection of the cones still lies in the tips.
\end{itemize}

Over all three cases, the bound \(\delta < \frac{3}{4n!}\) is smallest and so it is the one that should be used. 
This upper bound is not necessarily tight, but we do now know that we can find non-zero values for \(\delta\) that can be represented in a polynomial number of bits that do not change the combinatorial structure of the input for the simplification problem. So a simplified polyline of $2n+3$ links with \(0 < \delta < \frac{3}{4n!}\) will only exist if and only if it also exists for \(\delta = 0\).
Since we can have a small enough \(\delta\) of polynomial bit complexity, this means the \textsc{Directed Curve Simplifiction} problem is hard in general, as for larger values of \(\delta\) the construction could be scaled up.  

If the general problem is in NP is hard to say, since our approach for showing this for the previous two problems does not extend, and it might be possible that inputs exist where the only possible simplified polylines of $k$ links have vertex coordinates of exponential bit complexity. We leave this as an open problem for now.


\begin{theorem}  
\label{thm:finalcellhard2}
  \curvesimp is NP-hard.
\end{theorem}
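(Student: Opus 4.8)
The plan is to bootstrap from the zero-tolerance case. Theorem~\ref{thm:finalcellhardsimplified} already establishes that \curvesimp is NP-complete when $\delta = 0$, via the reduction from \consegcover: on the constructed instance a simplification of $2n+3$ links must cover each of the $2n+3$ segments with exactly one link, and the feasible links coincide with the segments' supporting lines. To prove the general statement it suffices to show that this entire combinatorial analysis survives when $\delta$ is replaced by a sufficiently small positive value of polynomial bit complexity; for any larger target distance the whole construction can simply be scaled up.

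First I would describe the feasible space of a single link when $\delta > 0$. A link covers a segment $s$ iff $s$ lies within Hausdorff distance $\delta$ of the link, which forces the link to pass through both $\delta$-disks centered at the endpoints of $s$; the admissible directions therefore sweep out a cone bounded by the inner tangents of these two disks. I would split this cone into a \emph{tip} (the part within $\delta$ of $s$) and a \emph{tail} (the rest), and observe that the location where a link can bend from one segment to another corresponds exactly to an intersection of the two segments' cones. For $\delta = 0$ the cone degenerates to the supporting line, and two cones meet in a tail region precisely when the two supporting lines cross outside both segments. Hence the combinatorial structure is preserved as long as fattening the cones creates no new tail--tail intersection, since such an intersection is exactly what would let the polyline bend between two needles whose associated rays do not intersect.

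Next I would bound $\delta$ by a case analysis over pairs of needle cones, reusing the three-case split from Lemma~\ref{lem:case1} according to the interleaving of the defining $x$-coordinates ($c<b<d$, $b<c$, $d<b$); the cones of the three leading segments need no attention, as they already meet every needle cone in its tip. In each case I would bound the horizontal displacement of a cone boundary, which is linear in $\delta$ and amplified by the ratio of factorial heights, against the original horizontal gap between the two rays at the critical height, read off a constraint on $\delta$, and take the minimum over all cases to obtain $\delta < \tfrac{3}{4n!}$.

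The conceptually delicate step is the nested case $d < b$, where the $(c,d)$-needle sits inside the $(a,b)$-needle: here the factorial growth established in Lemma~\ref{lem:case1} is precisely what keeps the original horizontal gap bounded away from zero, so that a small fattening cannot push the crossing out of the tips. The numerically binding constraint, however, comes from the adjacent case $b < c$, which forces $\delta < \tfrac{3}{4n!}$. Since $n!$ has $O(n\log n)$ bits, this $\delta$ is representable in polynomially many bits, so the reduction remains polynomial and NP-hardness of \curvesimp follows. (Whether the problem lies in NP for general $\delta$ I would leave open, as an optimal simplification might require vertices of exponential bit complexity.)
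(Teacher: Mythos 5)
Your proposal is correct and follows essentially the same route as the paper's own proof: bootstrapping from the $\delta=0$ reduction, analyzing the feasible links via cones with tips and tails, ruling out new tail--tail intersections through the same three-case split as in Lemma~\ref{lem:case1} (with the leading segments dismissed because they meet every needle cone in its tip), obtaining the binding bound $\delta < \frac{3}{4n!}$ from the case $b<c$, and concluding via the polynomial bit complexity of $\delta$ and scaling for larger distances. There is no gap to report.
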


Whether \curvesimp is also in NP is hard to say, since our approach for showing this for the previous problems does not extend, and it might be possible that inputs exist where the only possible simplifications of $k$ links have vertex coordinates of exponential bit complexity. This remains an open problem.

\end {document}